\newtheorem{theorem}{Theorem}[section]
\newtheorem{lemma}[theorem]{Lemma}
\newtheorem{proposition}[theorem]{Proposition}
\newtheorem{corollary}[theorem]{Corollary}
\theoremstyle{definition}
\newtheorem{definition}[theorem]{Definition}
\theoremstyle{remark}
\newtheorem{remark}[theorem]{Remark}
\numberwithin{equation}{section}
\newcommand\remove[1]{}
\newcommand\nc{\newcommand}
\newcommand\cc{\mathcal{C}}
\newcommand\cd{\mathcal{D}}
\newcommand\ce{\mathcal{E}}
\newcommand\ci{\mathcal{I}}
\newcommand\wt{\mathrm{w}}
\newcommand\rk{\mathrm{rank}}
\newcommand\ip[1]{\langle{#1}\rangle}
\newcommand\rp{\overrightarrow{\mathcal{P}}}
\newcommand\lp{\overleftarrow{\mathcal{P}}}
\newcommand\bo{\mathbf{0}}
\def\supp{\qopname\relax{no}{supp}}
\def\vol{\qopname\relax{no}{vol}}
\def\rank{\qopname\relax{no}{rk}}
\def\shape{\qopname\relax{no}{shape}}
\def\wt{\qopname\relax{no}{w}}
\nc\bfa{{\boldsymbol a}}\nc\bfA{{\bf A}}\nc\cA{{\mathcal A}}\nc\sA{{\mathscr A}}
\nc\bfb{{\boldsymbol b}}\nc\bfB{{\bf B}}\nc\cB{{\mathcal B}}\nc\sB{{\mathscr B}}
\nc\bfc{{\boldsymbol c}}\nc\bfC{{\bf C}}\nc\cC{{\mathcal C}}\nc\sC{{\mathscr C}}
\nc\bfd{{\boldsymbol d}}\nc\bfD{{\bf D}}\nc\cD{{\mathcal D}}\nc\sD{{\mathscr D}}
\nc\bfe{{\boldsymbol e}}\nc\bfE{{\bf E}}\nc\cE{{\mathcal E}}\nc\sE{{\mathscr E}}
\nc\bff{{\boldsymbol f}}\nc\bfF{{\bf F}}\nc\cF{{\mathcal F}}\nc\sF{{\mathscr F}}
\nc\bfg{{\boldsymbol g}}\nc\bfG{{\bf G}}\nc\cG{{\mathcal G}}\nc\sG{{\mathscr G}}
\nc\bfh{{\boldsymbol h}}\nc\bfH{{\bf H}}\nc\cH{{\mathcal H}}\nc\sH{{\mathscr H}}
\nc\bfi{{\boldsymbol i}}\nc\bfI{{\bf I}}\nc\cI{{\mathcal I}}\nc\sI{{\mathscr I}}
\nc\bfj{{\boldsymbol j}}\nc\bfJ{{\bf J}}\nc\cJ{{\mathcal J}}\nc\sJ{{\mathscr J}}
\nc\bfk{{\boldsymbol k}}\nc\bfK{{\bf K}}\nc\cK{{\mathcal K}}\nc\sK{{\mathscr K}}
\nc\bfl{{\boldsymbol l}}\nc\bfL{{\bf L}}\nc\cL{{\mathcal L}}\nc\sL{{\mathscr L}}
\nc\bfm{{\boldsymbol m}}\nc\bfM{{\bf M}}\nc\cM{{\mathcal M}}\nc\sM{{\mathscr M}}
\nc\bfn{{\boldsymbol n}}\nc\bfN{{\bf N}}\nc\cN{{\mathcal N}}\nc\sN{{\mathscr N}}
\nc\bfo{{\boldsymbol o}}\nc\bfO{{\bf O}}\nc\cO{{\mathcal O}}\nc\sO{{\mathscr O}}
\nc\bfp{{\boldsymbol p}}\nc\bfP{{\bf P}}\nc\cP{{\mathcal P}}\nc\sP{{\mathscr P}}
\nc\bfq{{\boldsymbol q}}\nc\bfQ{{\bf Q}}\nc\cQ{{\mathcal Q}}\nc\sQ{{\mathscr Q}}
\nc\bfr{{\boldsymbol r}}\nc\bfR{{\bf R}}\nc\cR{{\mathcal R}}\nc\sR{{\mathscr R}}
\nc\bfs{{\boldsymbol s}}\nc\bfS{{\bf S}}\nc\cS{{\mathcal S}}\nc\sS{{\mathscr S}}
\nc\bft{{\boldsymbol t}}\nc\bfT{{\bf T}}\nc\cT{{\mathcal T}}\nc\sT{{\mathscr T}}
\nc\bfu{{\boldsymbol u}}\nc\bfU{{\bf U}}\nc\cU{{\mathcal U}}\nc\sU{{\mathscr U}}
\nc\bfv{{\boldsymbol v}}\nc\bfV{{\bf V}}\nc\cV{{\mathcal V}}\nc\sV{{\mathscr V}}
\nc\bfw{{\boldsymbol w}}\nc\bfW{{\bf W}}\nc\cW{{\mathcal W}}\nc\sW{{\mathscr W}}
\nc\bfx{{\boldsymbol x}}\nc\bfX{{\bf Z}}\nc\cX{{\mathcal X}}\nc\sX{{\mathscr X}}
\nc\bfy{{\boldsymbol y}}\nc\bfY{{\bf Y}}\nc\cY{{\mathcal Y}}\nc\sY{{\mathscr Y}}
\nc\bfz{{\boldsymbol z}}\nc\bfZ{{\bf Z}}\nc\cZ{{\mathcal Z}}\nc\sZ{{\mathscr Z}}
\nc\od{{\bar d}}\nc\ow{{\bar w}}\nc\odelta{{\bar\delta}}
\nc\ox{{\bar x}}\nc\oy{{\bar y}}\nc\ou{{\bar u}}
\nc\oh{{\bar h}}
\newcommand\ff{{\mathbb F}}
\nc\dgv{\delta_{\text{\rm GV}}}
\nc\dcrit{\delta_{\text{\rm{crit}}}}
\nc\Esp{E_{\text{\rm sp}}}
\renewcommand\epsilon{\varepsilon}
\nc\hr{\overrightarrow{H}}
\nc\hl{\overleftarrow{H}}
\begin{document}

\title{Near MDS poset codes and distributions}

\author{Alexander Barg}
\address{Department of ECE/Institute for Systems Research, University
of Maryland, College Park, MD 20817 and 
Dobrushin Mathematical Lab., Institute for Problems of Information
Transmission, Moscow, Russia}
\email{abarg@umd.edu}

\thanks{This research supported in part by NSF grants
DMS0807411, CCF0916919, CCF0830699, and CCF0635271.}

\author{Punarbasu Purkayastha}
\address{Department of ECE/Institute for Systems Research, University
of Maryland, College Park, MD 20817}
\email{ppurka@umd.edu}

\subjclass{Primary 94B25}

\keywords{Poset metrics, ordered Hamming space, MDS codes}

\begin{abstract}
We study $q$-ary codes with distance defined by a partial order
of the coordinates of the codewords. Maximum Distance Separable
(MDS) codes in the poset metric have been studied in a number of earlier
works. We consider codes that are close to MDS codes by the value of their
minimum distance. For such
codes, we determine their weight distribution, and in the particular
case of the ``ordered metric'' characterize distributions of points
in the unit cube defined by the codes. We also give some constructions
of codes in the ordered Hamming space.
\end{abstract}

\maketitle

\section{Introduction}

A set of points $C=\{c_1,\dots,c_M\}$ in the $q$-ary $n$-dimensional 
Hamming space $\ff_q^n$ is called a Maximum Distance Separable (MDS) 
code if the Hamming distance 
between any two distinct points of $C$
satisfies $d(c_i,c_j)\ge d$ and the number of points is $M=q^{n-d+1}.$
By the well-known Singleton bound of coding theory, this is the maximum
possible number of points with the given separation. If $C$ is an MDS code 
that forms an $\ff_q$-linear space, then its dimension 
$k$, distance $d$ and length $n$ satisfy the relation $d=n-k+1.$
MDS codes are known to be linked to classical old problems in
finite geometry and to a number of other combinatorial questions related
to the Hamming space \cite{rot06,ald07}.
At the same time, the length of MDS codes cannot be very large;
in particular, in all the known cases, $n\le q+2.$
 This restriction
has led to the study of classes of codes with distance properties close to 
MDS codes, such as $t$-th rank MDS codes \cite{wei91}, 
near MDS codes \cite{dod95} and almost MDS codes \cite{boe96}. 
The distance of these codes is only slightly less than $n-k+1$, 
and at the same time they still have many of the structural properties
associated with MDS codes. 

In this paper we extend the study of linear near MDS (NMDS) codes to 
the case of 
the ordered Hamming space and more generally, to poset metrics.
The ordered Hamming weight was introduced by Niederreiter
\cite{nie86} for the purpose of studying uniform distributions of points 
in the unit cube. The ordered Hamming space in the context of coding theory
was first considered by Rosenbloom and 
Tsfasman \cite{ros97} for a study of one generalization of Reed-Solomon 
codes (the ordered distance is therefore sometimes called the NRT distance).
The ordered Hamming space and the NRT metric  have multiple applications
in coding theory including a 
generalization of the Fourier transform over
finite fields \cite{gun94,mas96}, list decoding of algebraic codes
\cite{nie01b}, and coding for a fading channel of special structure
\cite{ros97,gan07}. This space also gives rise to a range of combinatorial 
problems. 
In the context of algebraic combinatorics, it supports a formally self-dual
association scheme whose eigenvalues form a family of multivariate
discrete orthogonal polynomials \cite{mar99,bie06a,bar09b}.

A particular class of distributions in the unit cube 
$U^n=[0,1)^n$, called $(t,m,n)$-nets,
defined by Niederreiter in the course of his studies, presently forms 
the subject of a large body of literature.
MDS codes in the ordered Hamming space and their relations to 
distributions and $(t,m,n)$-nets
have been extensively studied \cite{ros97,skr01,dou02b,hyu08}.
The ordered Hamming space was further generalized by Brualdi et al. 
in \cite{bru95} which introduced metrics on strings defined by arbitrary
partially ordered sets, calling them poset metrics.

The relation between  MDS and NMDS codes in the ordered metric and
distributions is the main motivation of the present study.
As was observed by Skriganov \cite{skr01}, MDS codes correspond
to optimal uniform distributions of points in the unit cube.
The notion of uniformity is rather intuitive: an 
allocation of $M$ points forms a uniform distribution if every measurable
subset $A\subset U^n$ contains a $\vol(A)$ proportion of the $M$ points
(in distributions that arise from codes, this property is approximated
by requiring that it hold only for some fixed collection of subsets).
Skriganov \cite{skr01} 
observes that distributions that arise from MDS codes are optimal
in some well-defined sense. In the same way, NMDS codes correspond to 
distributions that  are
not far from optimal (they are characterized exactly below).
Although the primary motivation is to study NMDS codes in the ordered
metric, the calculations are easily generalized to the poset metric. We
will hence derive the results in the general case of the poset metric, and
mention the results in the ordered metric as specific cases.  

The rest of the article is organized as follows. In the next section we provide 
basic definitions and some properties of near-MDS codes. We will also have a 
chance to discuss generalized Hamming weights of Wei \cite{wei91} in the 
poset metric case. In Section \ref{sec:nmds-tms} we show a relationship between 
distribution of points in the unit cube and NMDS codes. In Section \ref{sec:wt} 
we determine the weight distribution of NMDS codes, and finally in 
Section \ref{sec:nmds-constr}, we provide some constructions of NMDS codes in 
the ordered Hamming space. 

\section{Definitions and basic properties}
\label{sec:def}

\subsection{Poset metrics}
We begin with defining poset metrics on $q$-ary strings of a fixed length
and introduce the ordered Hamming
 metric as a special case of the general definition. 
Entries of a string $x=(x_1,x_2,\dots)$ are indexed by a finite set
$N$ which we call the set of coordinates.
Let $\rp$ be an arbitrary  partial order ($\le$) on $N.$ 
Together $N$ and $\rp$ form a {\em poset}.
An {\em ideal} of the poset is a subset $I\subset N$ that is
``downward closed'' under the $\le$ relation, which means that the 
conditions $i,j\in N$, $j\in I$ and $i\le j$ imply that $i\in I$. 
For the reasons that will become clear below, such ideals will be called 
{\em left-adjusted} (l.a.).

A chain is a linearly ordered subset of the poset. 
The {\em dual poset} $\lp$ is the set $N$ with the same set of chains
as $\rp$, but the order within each of them reversed. 
In other words $j\le i$ in $\lp$ if and only if $i\le j$ in $\rp$. 
An ideal in the dual poset will be termed {\em right-adjusted} (r.a.).
For a subset $S\subseteq\rp$ we denote by $\ip{S}=\ip{S}_{\rp}$ the smallest 
$\rp$-ideal containing the set $S$  (we write $S\subseteq\rp$ to refer 
to a subset $S\subseteq N$ whose elements are ordered according to $\rp$).
The support of a sequence $x$ is the subset $\supp x\subseteq N$ formed
by the indices of all the nonzero entries of $x.$
The set $\ip{\supp x} \subseteq\rp$ will be called the l.a. support of $x.$
The r.a. support is defined analogously.

\begin{definition} \label{def:poset} (Brualdi et al.~\cite{bru95})
Let $\rp$ be a poset defined on $N$ and let $x,y\in\ff_q^{|N|}$
be two strings.
Define the weight of $x$ with respect to $\rp$ as $\wt(x)=|\ip{\supp x}|,$
i.e., the size of the smallest $\rp$-ideal that contains the support of $x.$
The distance between $x$ and $y$ is defined 
as $d_{\rp} (x,y) = \wt(x-y) = |\ip{\supp (x-y)}|$.
\end{definition}

A code $\cC$ of minimum distance $d$ is a subset of $\ff_q^{|N|}$ such 
that any two distinct vectors $x$ and $y$ of $\cC$ 
satisfy $d_{\rp} (x,y)\ge d.$ It is similarly possible to consider codes 
whose distance is measured relative to $\lp$. In this paper we will be
concerned with {\em linear codes} over a finite field by which we mean linear
subspaces of $\ff_q^{|N|}.$ Given a linear code 
$\cc\subset \ff_q^{|N|}$ its {\em dual code} $\cC^\perp$ is the set of 
vectors $\{y\in\ff_q^{|N|}: \forall_{x\in\cc}\; \sum_i x_iy_i=0\}.$ The weights 
in the dual code $\cC^\perp$ are considered with respect to the dual poset $\lp.$

A subset of $\ff_q^{|N|}$ is called an {\em orthogonal array} of strength
$t$ and index $\theta$ with respect to $\rp$ if any $t$ l.a. columns 
contain any vector $z\in \ff_q^t$ exactly $\theta$ times. In particular,
the dual of a {\em linear} poset code is also a {\em linear} 
orthogonal array.

For instance, the 
Hamming metric is defined by the partial order $\rp$ which is a single 
antichain of length $n=|N|$ (no two elements are comparable). 
Accordingly, the distance between
two sequences is given by the number of coordinates in which they differ.
In this case, $\rp=\lp.$

\subsection{Ordered Hamming metric}
The {\em ordered Hamming metric} is defined by a poset $\rp$ which is 
a disjoint union of $n$ chains of equal length $r.$ 
Since we work with this metric in later sections of the paper, 
let us discuss it in more detail.
In this case $N$ is a union of $n$ blocks of length $r$, and it is 
convenient to write a vector (sequence) as
 $x = (x_{11},\dots,x_{1r},\dots,x_{n1},\dots,x_{nr}) \in \ff_q^{r,n}$.
According to Definition \ref{def:poset}, the weight of $x$ is given by
$$
\wt(x) = \sum_{i=1}^n \max(j: x_{ij} \ne 0). 
$$

For a given vector $x$ let $e_i, i=1,\dots,r$ be
the number of $r$-blocks of $x$ whose rightmost nonzero entry is in the
$i$th position counting from the beginning of the block. The $r$-vector
$e=(e_1,\dots,e_r)$ will be called the {\em shape} of $x$. For brevity we
will write
$$
|e|=\sum_i e_i, \quad |e|'=\sum_{i} ie_i, \quad e_0=n-|e|.
$$
For $I=\ip{\supp x}$ we will denote the shape of the ideal $I$ as
shape($I$)$=e$.
By analogy with the properties of ideals in the ordered Hamming space, 
we use the term ``left adjusted'' for ideals in general posets $\rp.$

An $(nr,M,d)$ {\em ordered code} $\cc\subset \ff_q^{r,n}$ is an arbitrary
subset of $M$ vectors in $\ff_q^{r,n}$ such that the ordered distance between
any two distinct vectors in $\cc$ is at least $d$. If $\cc$ is a linear
code of dimension $k$ over $\ff_q$ and minimum ordered distance $d$, 
we will denote it as an $[nr,k,d]$ code. The dual of $\cc$, denoted as
$\cC^\perp$, is defined as $\cC^\perp=\{x\in\ff_q^{r,n}: \,\forall_{c\in\cc}
\; \sum_{i,j} x_{ij} c_{ij}=0\}.$
The distance in $\cC^\perp$ is derived from the dual order 
$\lp$, i.e., from the r.a. ideals. 

The notion of orthogonal arrays in the ordered Hamming space is derived
from the general definition. They will be called {\em ordered orthogonal
arrays} (OOAs) below. We write $(t,n,r,q)$ OOA for an orthogonal array
 of strength $t$ in $\ff_q^{r,n}.$
Combinatorics of the ordered Hamming space and the duality between codes
and OOAs
was studied in detail by Martin and Stinson \cite{mar99}, Skriganov
\cite{skr01}, and 
the present authors \cite{bar09b}.

\subsection{NMDS poset codes}
We begin our study of NMDS codes in the poset space with several
definitions that are generalized directly from the corresponding
definitions in the Hamming
space \cite{wei91,dod95}. The $t$-th {\em generalized poset weight} of
a linear $[n,k]$ code $\cc$ is defined as 
$$
d_t(\cc) \triangleq \min\{|\ip{\supp \cd}|: \cd \mbox{ is an $[n,t]$
    subcode of }\cc\}, 
$$
where $\supp \cd$ is the union of the supports of all the vectors in $\cd$.
Note that $d_1(\cc) = d,$ the minimum distance of the code $\cc.$
Generalized poset weights
have properties analogous to the well-known
set of properties of generalized Hamming weights.
\begin{lemma}
\label{lemma:dr}
 Let $\cc$ be a linear $[n,k]$ poset code in $\ff_q^n$. Then

     $(1)$ $0< d_1(\cc) < d_2(\cc) < \cdots < d_k(\cc) \le n$.

    $(2)$  Generalized Singleton bound: $d_t(\cc)\le n-\dim(\cc)+t,\quad\forall t\ge1.$

    $(3)$ If $\cC^\perp$ is the dual code of $\cc$ then 
    $$
        \{d_1(\cc),d_2(\cc),\dots,d_k(\cc)\}\cup
   (n+1-\{d_1(\cc^\perp),d_2(\cc^\perp),
  \dots,d_{n-k}(\cc^\perp)\})= \{1,\dots,n\}.
    $$

    $(4)$ Let $H$ be the parity check matrix of $\cc$. Then $d_t(\cc)
        = \delta$ if and only if
        \begin{enumerate}
            \item[(a)] Every $\delta - 1$ l.a. columns of $H$ have
            rank at least $\delta-t$.
            \item[(b)] There exist $\delta$ l.a. columns of $H$ with
            rank exactly $\delta-t$.
        \end{enumerate}
\end{lemma}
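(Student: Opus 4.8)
The plan is to translate all four statements into the language of column ranks of a parity-check matrix $H$ of $\cc$. For a left-adjusted (l.a.) ideal $I\subseteq\rp$ write $\cc(I)=\{x\in\cc:\supp x\subseteq I\}$ and let $H_I$ be the submatrix of $H$ on the columns indexed by $I$; a vector supported on $I$ lies in $\cc$ precisely when its restriction to $I$ is a linear dependence among these columns, so $\dim\cc(I)=|I|-\rk(H_I)$. Since $I$ is an ideal, a $t$-dimensional subcode $\cd$ has $\ip{\supp\cd}\subseteq I$ iff $\supp\cd\subseteq I$ iff $\cd\subseteq\cc(I)$, whence
$$
d_t(\cc)=\min\{\,|I| : I\text{ an l.a.\ ideal},\ \dim\cc(I)\ge t\,\}
        =\min\{\,|I| : \rk(H_I)\le|I|-t\,\}.
$$
(Here one uses that l.a.\ ideals of every size $0,1,\dots,n$ exist, obtained by repeatedly adjoining a minimal element of the complement.) The poset-specific tool is the following reduction: if $I=\ip{\supp\cd}$ and $p$ is a maximal element of $I$, then $p\le s$ for some $s\in\supp\cd$, and maximality forces $p=s\in\supp\cd$; hence $x\mapsto x_p$ is a nonzero functional on $\cd$ whose kernel $\cd'$ has dimension $\dim\cd-1$, and since $p$ is below no other support element, $p\notin\ip{\supp\cd'}\subseteq I\setminus\{p\}$.

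Statement (4) now reads off the displayed formula: for $\delta=d_t(\cc)$ no l.a.\ ideal of size $\le\delta-1$ satisfies $\rk(H_I)\le|I|-t$, which for size exactly $\delta-1$ is condition (a), and this extends to all smaller sizes because a smaller l.a.\ ideal embeds in one of size $\delta-1$ while $\dim\cc(\cdot)$ is inclusion-monotone; a minimizing ideal $I_0$ of size $\delta$ gives (b), the reduction guaranteeing that its dimension is exactly $t$ (otherwise one would shrink it), so $\rk(H_{I_0})=\delta-t$. Applying the reduction to a $(t{+}1)$-dimensional subcode attaining $d_{t+1}$ produces a $t$-dimensional subcode whose ideal support has size $\le d_{t+1}-1$, giving $d_t<d_{t+1}$; with the trivial bound $d_k\le n$ this is (1). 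The same reduction shows the nondecreasing profile $A(m)=\max\{\dim\cc(I):I\text{ l.a.},\,|I|=m\}$ climbs by at most $1$ per step, so its exactly $k$ jumps occur at $d_1<\dots<d_k$. Then (2) is immediate: $d_t<d_{t+1}<\dots<d_k\le n$ are $k-t+1$ strictly increasing integers, forcing $d_t\le n-k+t$.

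For the duality (3) I would combine two ingredients. Complementation $I\mapsto I^c:=N\setminus I$ is a size-reversing bijection from l.a.\ ideals of $\rp$ onto r.a.\ ideals of $\lp$. And the identity $(\proj_{I^c}\cc)^\perp=\cc^\perp(I^c)$ (extend a vector on $I^c$ by zero; here $\cc^\perp(J)$ is defined as $\cc(J)$ was) yields, for every ideal $I$,
$$
\dim\cc(I)-\dim\cc^\perp(I^c)=|I|-(n-k).
$$
Maximizing over l.a.\ ideals of fixed size $m$ and using the bijection converts this into $B(n-m)=A(m)+(n-k)-m$, where $B$ is the profile of $\cc^\perp$ formed from r.a.\ ideals. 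Comparing successive increments gives $A(m)-A(m-1)=1$ exactly when $B(n-m+1)-B(n-m)=0$, i.e.\ $m$ is a jump of $A$ iff $n+1-m$ is not a jump of $B$. As the jumps of $A$ and $B$ are $\{d_t(\cc)\}$ and $\{d_s(\cc^\perp)\}$, this is the claimed complementarity in $\{1,\dots,n\}$, confirmed by the count $k+(n-k)=n$.

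The step I expect to require the most care is the bookkeeping in (3): keeping l.a.\ and r.a.\ ideals and their sizes aligned under complementation, and checking that the maxima defining $A$ and $B$ are taken over complementary families so that the pointwise identity passes to the profiles. The truly poset-specific input, by contrast, is only the elementary remark that a maximal element of $\ip{\supp\cd}$ already belongs to $\supp\cd$; granted this, Wei's Hamming-space arguments carry over with almost no change.
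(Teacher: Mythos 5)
Your argument is correct. Parts (1), (2) and (4) follow essentially the paper's own route: your ``poset-specific reduction'' (a maximal element $p$ of $\ip{\supp \cd}$ must itself lie in $\supp\cd$, so cutting $\cd$ by the functional $x\mapsto x_p$ drops the dimension by one and the ideal support by at least one) is exactly the paper's construction of the subcode $\cd_t^i$ in the proof of (1), and (2) and (4) are then the same bookkeeping the paper performs (the paper leaves (4) as ``standard linear algebra'', which you fill in correctly). Where you genuinely diverge is part (3). The paper follows Wei's original argument: it fixes $s$, sets $t=k+s-d_s(\cC^\perp)$, and shows by two separate rank computations (on a parity-check matrix adapted to $\cd_s$ for $r\le t$, and on a generator matrix adapted to $\cd_{t+i}$ for $r>t$) that $n+1-d_s(\cC^\perp)$ is never a generalized weight of $\cc$, finishing with the cardinality count $k+(n-k)=n$. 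You instead introduce the dimension profiles $A(m)$ and $B(m)$, prove they climb by at most one per step, and convert the pointwise shortening--puncturing identity $\dim\cc(I)-\dim\cC^\perp(I^c)=|I|-(n-k)$ into the profile identity $B(n-m)=A(m)+(n-k)-m$, so that the jump sets are automatically complementary. Both proofs rest on the same underlying duality between $\cc$ restricted to an ideal and $\cC^\perp$ restricted to its complement (which is where the l.a./r.a.\ alignment matters, and you handle it correctly); your version trades the paper's case analysis and proof by contradiction for a single monotonicity argument, and has the minor added benefit of making the disjointness of the two sets in (3) immediate rather than a consequence of counting.
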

\begin{proof}
    (1) Let $\cd_t \subseteq \cc$ be a linear subspace such that
        $|\ip{\supp \cd_t}| = d_t(\cc)$  and $\rk (\cd_t)= t,\ t\ge1$. 
        Let $\Omega(\cd_t)$ denote the maximal elements of the
        ideal $\ip{\supp \cd_t}$. For each coordinate
        in $\Omega(\cd_t)$, $\cd_t$ has at least one vector with a nonzero
        component in that coordinate. We pick $i\in\Omega(\cd_t)$ and let
        $\cd_t^i$ be obtained by retaining only those vectors $v$ in $\cd_t$
        which have $v_i = 0.$ Then 
        $$
            d_{t-1}(\cc) \le | \ip{\supp \cd_t^i }| \le d_t(\cc) - 1.
        $$

(2) This is a consequence of the fact that $d_{t+1} \ge d_t + 1$ and
        $d_k \le n.$

    (3) This proof is analogous to \cite{wei91}. The reason for giving
it here is to assure oneself that no complications arise from the fact that
the weights in $\cc^\perp$ are measured with respect to the dual poset.

We show that for any $1\le s\le n-k-1$, 
         $$
        n+1-d_s(\cC^\perp) \notin \{d_r(\cc): 1\le r\le k\}.
         $$ 
Let $t=k+s-d_s(\cc^\perp).$ We consider two cases (one of which can be void),
namely, $r\le t$ and $r\ge t+1$ and show that for each of them,
$n+1-d_s(\cC^\perp)\ne d_r(\cc).$

        Take an $s$-dimensional subcode $\cd_s\subseteq \cC^\perp$
        such that  $|\ip{\supp \cd_s}_{\lp}|=d_s(\cC^\perp).$          
        Form a parity-check matrix of the code $\cC$ whose first rows
        are some $s$ linearly independent vectors from $\cd_s$.
        Let $D$ be the complement of $\ip{\supp \cd_s}$ in the set of
        coordinates. Let the submatrix of $H$ formed of all the columns
        in $D$ be denoted by $H[D]$. The rank of $H[D]$ is at most $n-k-s$
        and its corank (dimension of the null space) is at least 
        $$
            |D| - (n-k-s) = n-d_s(\cC^\perp) - n + k + s
                            = k+s - d_s(\cC^\perp).
        $$
        Then $d_t(\cc) \le |D|=n - d_s(\cC^\perp)$ 
and so $d_r(\cc) \le n-d_s(\cC^\perp),$ $1\le r \le t$.
        
        Now let us show that $d_{t+i}(\cc) \ne n+1-d_s(\cC^\perp)$ for all
        $1\le i\le k-t.$ 
Assume the contrary and consider a generator 
   matrix $G$ of $\cc$ with the first $t+i$ rows corresponding to the subcode
        $\cd_{t+i}\subseteq \cc$ with $|\ip{\supp \cd_{t+i}}_{\rp}| = d_{t+i}(\cc)$.
        Let $D$ be the complement of $\ip{\supp \cd_{t+i}}$ in the set of
        coordinates. Then $G[D]$ is a $k\times(n-d_{t+i}(\cC))$ matrix of
        rank $k-t-i.$
        By part (2) of the lemma, $n-d_{t+i}(\cC)\ge k-t-i,$ so
           \begin{align*}
           \dim\ker(G[D])&\ge n-d_{t+i}(\cC)-k+t+i\\
           &=s+i-(d_s(\cc^\perp)+n-d_{t+i}(\cc))\\
           &= s+i-1,
           \end{align*}
where the first equality follows on substituting the value of $k$ 
and the second one by using the assumption.
        Hence $d_{s+i-1}(\cC^\perp) \le |D| = d_s(\cC^\perp)-1,$
         which contradicts part (1) of the lemma.

    (4) Follows by standard linear-algebraic arguments.
\end{proof}
    
\remove{If any $\delta - 1$ l.a. columns of $H$  have rank at
        least $\delta-t$ then the null space of $H$ restricted to the
        $\delta-1 $ l.a. columns will have rank at most $t-1$.
        Hence $d_t(\cc) > \delta-1$. There exist $
        \delta$ l.a. columns with rank $\delta-t$ which implies
        that the corresponding null space of those columns have rank $t$
        and so $d_t(\cc) \le \delta$. 

        Conversely, if $d_t(\cc) = \delta$ then all the codewords in $D_t$
        belong to the null space of $H$. Hence, the matrix $H$ restricted to
        these $\delta $ columns has
        rank $\delta-t$. Finally, any subspace of $\cc$ supported on
        $\delta-1$ coordinates has rank at most $t-1$. The corresponding
        $\delta-1$ l.a. columns of $H$ have rank at least
        $\delta-t$.}


\begin{definition}
A linear code $\cc[n,k,d]$ is called 
NMDS if $d(\cc) = n-k$ and $d_2(\cc) = n-k+2.$
\end{definition}

Closely related is the notion of \emph{almost-MDS code} where 
we have only the constraint that $d(\cc)=n-k$ and there is no constraint on
$d_2(\cc)$. In this work, we focus only on NMDS codes.
The next set of properties of NMDS codes can be readily obtained as
generalizations of the corresponding properties of NMDS codes in the
Hamming space \cite{dod95}.
\begin{lemma} \label{lem:nmds}
Let $\cc \subseteq \ff_q^n$ be a linear
$[n,k,d]$ code in the poset $\rp$.
\begin{enumerate}
    \item $\cc$ is NMDS if and only if
    \begin{enumerate}
        \item Any $n-k-1$ l.a. columns of the parity check matrix
        $H$ are linearly independent.
        \item There exist $n-k$ l.a. linearly dependent columns
        of $H$.
        \item Any l.a. $n-k+1$ columns of $H$ are full ranked.
    \end{enumerate}
    \item If $\cc$ is NMDS, so is its dual $\cC^\perp$.
    \item $\cc$ is NMDS if and only if $d(\cc) + d(\cC^\perp) = n$.
    \item If $\cc$ is NMDS then there exists an NMDS code with parameters
    $[n-1,k-1,d]$ and an NMDS code with parameters $[n-1,k,d]$.
\end{enumerate}
\end{lemma}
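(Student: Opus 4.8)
The plan is to reduce all four parts to the column-rank characterization of generalized weights in Lemma~\ref{lemma:dr}(4), together with the monotonicity, Singleton, and duality statements in Lemma~\ref{lemma:dr}(1)--(3); I expect the only genuinely delicate argument to be the construction in part~(4).

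\emph{Part (1).} I would specialize Lemma~\ref{lemma:dr}(4). Being NMDS means $d_1(\cc)=n-k$ and $d_2(\cc)=n-k+2$. Applying (4) with $t=1,\ \delta=n-k$: clause (4a) says every $n-k-1$ l.a. columns of $H$ have rank $\ge n-k-1$, i.e.\ are independent (item (a)), while (4b) produces $n-k$ l.a. columns of rank exactly $n-k-1$, i.e.\ linearly dependent (item (b)). Applying (4) with $t=2,\ \delta=n-k+2$: clause (4a) says every $n-k+1$ l.a. columns have rank $\ge n-k$, hence full rank since $H$ has $n-k$ rows (item (c)). The one point needing care is the converse, where I must recover the existence clauses (4b). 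Given (a),(b), the dependent ideal $I_0$ of size $n-k$ from (b) contains a size-$(n-k-1)$ subideal, which by (a) has rank $n-k-1$; thus $\rk H[I_0]=n-k-1$ exactly, giving (4b) for $d_1$. Adjoining two successive minimal elements of the complement of $I_0$ (possible since $k\ge2$) yields a l.a. ideal of size $n-k+2$ containing a size-$(n-k+1)$ subideal, which by (c) has rank $n-k$; this supplies (4b) for $\delta=n-k+2,t=2$, so $d_2(\cc)=n-k+2$.

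\emph{Part (2).} From Lemma~\ref{lemma:dr}(1) with $d_1=n-k$, $d_2=n-k+2$ and $d_k\le n$, the $k-1$ strictly increasing integers $d_2<\dots<d_k$ must fill the interval $[n-k+2,n]$, so the weight hierarchy of $\cc$ is exactly $\{n-k\}\cup\{n-k+2,\dots,n\}$. Feeding this into the partition identity of Lemma~\ref{lemma:dr}(3), whose two sides are disjoint by cardinality, forces $n+1-\{d_s(\cc^\perp)\}=\{1,\dots,n-k-1\}\cup\{n-k+1\}$, whence $\{d_s(\cc^\perp)\}=\{k\}\cup\{k+2,\dots,n\}$. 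In particular $d_1(\cc^\perp)=k=n-\dim\cc^\perp$ and $d_2(\cc^\perp)=k+2$, which is precisely the NMDS condition for the $[n,n-k]$ code $\cc^\perp$.

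\emph{Part (3).} The forward direction is immediate from part~(2): $d(\cc)=n-k$ and $d(\cc^\perp)=k$ sum to $n$. For the converse, write $A=\{d_r(\cc)\}$ and $B=n+1-\{d_s(\cc^\perp)\}$, a partition of $\{1,\dots,n\}$ by Lemma~\ref{lemma:dr}(3), and set $d=d(\cc)=\min A$, $d^\perp=d(\cc^\perp)$, so $\max B=n+1-d^\perp$. The hypothesis $d+d^\perp=n$ gives $\max B=d+1$; since also $\{1,\dots,d-1\}\subseteq B$ (everything below $\min A$) and $d\in A$, we obtain $B=\{1,\dots,d-1,d+1\}$, of size $d$. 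Hence $n-k=\abs{B}=d$ and $A=\{n-k\}\cup\{n-k+2,\dots,n\}$, giving $d_2(\cc)=n-k+2$, so $\cc$ is NMDS.

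\emph{Part (4).} I would obtain the $[n-1,k-1,d]$ code by shortening at a carefully chosen coordinate. Fix a minimum-weight codeword $c^\ast$, so $I_0:=\ip{\supp c^\ast}$ is a l.a. ideal of size $n-k$ with $H[I_0]$ dependent, and pick a coordinate $p$ that is maximal in $\rp$ and lies in the nonempty, up-closed complement $N\setminus I_0$. Shortening $\cc$ at $p$ and deleting that coordinate yields a length-$(n-1)$ code $\cc'$ over the restricted poset $\rp|_{N\setminus\{p\}}$, whose l.a. ideals are exactly the l.a. ideals of $\rp$ avoiding $p$; its parity-check matrix is $H$ with column $p$ removed, of rank $n-k$. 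Maximality of $p$ gives $p\notin\ip{\supp c}$ whenever $c_p=0$, so ordered weights are unchanged on $\cc'$ and, using $c^\ast$, $d(\cc')=n-k=(n-1)-(k-1)$. Conditions (a),(c) of part~(1) for $\cc'$ are inherited verbatim from those of $\cc$, since every l.a. ideal of $\rp|_{N\setminus\{p\}}$ is a l.a. ideal of $\rp$; and condition (b) survives precisely because $I_0$ avoids $p$, so $H[I_0]$ is still a dependent set of $n-k$ l.a. columns of the reduced matrix. By part~(1), $\cc'$ is NMDS. The second code is then automatic: applying this shortening to $\cc^\perp$ (NMDS by part~(2), with weights in $\lp$) produces an NMDS $[n-1,n-k-1,k]$ code, whose dual is by part~(2) an NMDS code of length $n-1$ and dimension $k$ (its distance being $(n-1)-k=d-1$). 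The main obstacle here is the choice of deleted coordinate: it must come from the correct end of the poset (maximal for shortening, so that ideals and weights are undisturbed) and must avoid $I_0$, which is exactly what keeps clause (b) of the NMDS characterization alive.
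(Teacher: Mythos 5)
Your proof is correct and follows essentially the same route as the paper: every part is reduced to Lemma~\ref{lemma:dr} (part (1) via its clause (4), part (2) via clauses (1) and (3), part (4) by deleting a suitable column of $H$, resp.\ of the dual's check matrix), with your write-up merely supplying the details the paper leaves implicit and, in part (3), replacing the paper's chain of inequalities by an equivalent direct count of the two sides of the partition identity in Lemma~\ref{lemma:dr}(3). One remark: your construction of the second code in part (4) yields minimum distance $(n-1)-k=d-1$, which is exactly what the NMDS condition forces for an $[n-1,k]$ code, so the parameters ``$[n-1,k,d]$'' in the statement appear to be a typo for $[n-1,k,d-1]$ and your argument proves the corrected claim.
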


\begin{proof}
(1) Parts (a) and (b) are immediate. Part (c) is obtained from
        Lemma \ref{lemma:dr}.

(2)   From Lemma \ref{lemma:dr} we obtain
        $$
            \{n+1-d_t(\cC^\perp), 1\le t\le n-k\}
            = \{1,\dots,n-k-1,n-k+1\}.
        $$
        Hence $d(\cC^\perp) = k$ and $d_2(\cC^\perp) = k+2$.

(3) Let $d(\cc) + d(\cC^\perp) = n$. Then 
   $$
     d_2(\cc^\perp)\ge d(\cc^\perp)+1=n-d(\cc)+1,
   $$
but then by Lemma \ref{lemma:dr}(3), $d_2(\cc^\perp)\ge n-d(\cc)+2.$ 
Next,
   $$
  n\ge d_{n-k}(\cc^\perp)\ge d_2(\cc^\perp)+n-k-2\ge 2n-k-d,
   $$
which implies that $d\ge n-k.$ This leaves us with the possibilities of
$d=n-k$ or $n-k+1,$ but the latter would imply that 
$d(\cc) + d(\cC^\perp) = n+2,$ so $d=n-k.$
Further, $d_2(\cc)\ge n-d(\cc^\perp)+2=n-k+2,$ as required.
            The converse is immediate.

(4) To get a $[n-1,k-1,d]$ NMDS code, delete a column of the parity
        check matrix $H$ of $\cc$ preserving a set of $n-k$ l.a.
        linearly dependent columns.
        To get a $[n-1,k,d]$ NMDS code, delete a column of the generator
        matrix $G$ of $\cc$ preserving a set of $k+1$ r.a.
        columns which contains $k$ r.a. linearly dependent
        columns. 
\end{proof}
\begin{lemma}
\label{lem:oa-poset}
Let $\cC$ be a linear poset code in $\rp$ with distance $d$ and let
$\cC^\perp$ be its dual code.
Then the matrix $M$ whose rows are the codewords of $\cC^\perp$ forms
an orthogonal array of strength $d-1$ with respect to $\rp.$
\end{lemma}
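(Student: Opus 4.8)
The plan is to reduce the orthogonal-array property of $\cC^\perp$ to the rank behaviour of the columns of a parity-check matrix of $\cC$, which is exactly what Lemma \ref{lemma:dr}(4) controls. A parity-check matrix $H$ of $\cC$ is a generator matrix of $\cC^\perp$, so the rows of $M$ are precisely the vectors $vH$ with $v$ ranging over $\ff_q^{n-k}$, and as a set they form all of $\cC^\perp$. By the definition of strength it suffices to fix an arbitrary left-adjusted ideal $I\subseteq N$ with $|I|=d-1$ and to show that the projection $\proj_I\colon\cC^\perp\to\ff_q^{I}$, $c\mapsto c|_I$, takes every value in $\ff_q^{I}$ the same number of times; the case of smaller ideals then follows by projecting further, since in a finite poset every smaller l.a. ideal extends to one of size $d-1$.

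First I would compute the image of $\proj_I$. Because $\proj_I$ is $\ff_q$-linear and $\cC^\perp$ is the row space of $H$, the image $\{(vH)|_I:v\in\ff_q^{n-k}\}$ is exactly the row space of the submatrix $H_I$ formed by the $d-1$ columns of $H$ indexed by $I$. Hence $\dim(\mathrm{Im}\,\proj_I)=\rk(H_I)$, the rank of these $d-1$ l.a. columns. Next I would invoke the hypothesis $d_1(\cC)=d$: applying Lemma \ref{lemma:dr}(4) with $t=1$ and $\delta=d$ shows that any $d-1$ left-adjusted columns of $H$ have rank at least $d-1$, hence are linearly independent, so $\rk(H_I)=d-1=|I|$. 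Thus $\mathrm{Im}\,\proj_I$ is a subspace of $\ff_q^{I}$ of full dimension, i.e.\ $\proj_I$ is surjective.

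Surjectivity together with linearity then finishes the count: the fibres $\proj_I^{-1}(z)$, $z\in\ff_q^{I}$, are the cosets of $\ker\proj_I$ and so all have the same size $|\cC^\perp|/q^{d-1}=q^{n-k-d+1}$. Each $z$ therefore appears exactly $\theta=q^{n-k-d+1}$ times in the columns indexed by $I$, and since $I$ was an arbitrary l.a. ideal of size $d-1$ this establishes that $M$ is an orthogonal array of strength $d-1$ (with index $q^{n-k-d+1}$). The only point requiring care is the bookkeeping in the first step: matching the ``$d-1$ l.a. columns'' in the orthogonal-array definition with those of Lemma \ref{lemma:dr}(4), both taken with respect to $\rp$, and checking that the image of the projection is the \emph{row} space of $H_I$ rather than its column space. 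Once this alignment is correct, both the surjectivity and the uniform count are immediate.
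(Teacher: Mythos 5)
Your proof is correct and follows the same route as the paper's (very terse) argument: the paper likewise reduces the claim to the two facts that $\cC^\perp$ is the row span of a parity-check matrix $H$ of $\cC$ and that any $d-1$ left-adjusted columns of $H$ are linearly independent. Your write-up simply fills in the standard surjectivity-plus-equal-fibres counting that the paper leaves implicit.
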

\begin{proof} 
Follows because 
(1), $\cc^\perp$ is the linear span of the parity-check matrix $H$
of $\cc;$ and
(2), any $d-1$ l.a. columns of $H$ are linearly independent.
\end{proof}

\section{NMDS codes and distributions}
\label{sec:nmds-tms}
In this section we prove a characterization of NMDS poset codes 
and then use this result to establish a relationship between 
NMDS codes in the ordered Hamming space $\ff_q^{r,n}$ and
uniform distributions of points in the unit cube $U^n.$
In our study of NMDS codes in the following sections,
we analyze the properties of the code
simultaneously as a linear code and as a linear orthogonal array.

Define the $I$-neighborhood of a poset code $\cC$ with respect to an ideal $I$
as
    $$
   B_I(\cc) = \bigcup_{c \in \cc} B_I(c),
    $$
where $B_I(x)=\{v \in \ff_q^{n}: {\supp (v-x)} \subseteq I\}.$  
We will say that a linear $k$-dimensional code $\cc$ forms an $I$-{\em tiling} 
if there exists a partition $\cc=\cC_1\cup\dots\cup\cC_{q^{k-1}}$ into
equal parts such that
the $I$-neighborhoods of its parts are disjoint. If in addition
the $I$-neighborhoods form a partition of $\ff_q^n,$  we say $\cc$ forms
a {\em perfect $I$-tiling}. 

\begin{theorem} \label{thm:nmds-equiv}
Let $\cc \subseteq \ff_q^{n}$ be an $[n,k,d]$ linear code in the poset
$\rp$. $\cc$ is NMDS if and only if
\begin{enumerate}
\item For any $I\subset \rp, |I|=n-k+1,$ the code $\cc$ forms a perfect $I$-tiling.
\item There exists an ideal $I\subset \rp, |I|=n-k$ 
with respect to which $\cc$ forms an $I$-tiling. No smaller-sized ideals
with this property exist.
\end{enumerate}
\end{theorem}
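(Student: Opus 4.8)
The plan is to reduce the two combinatorial tiling conditions to a single linear-algebraic invariant and then read off the distance data from Lemma~\ref{lemma:dr}. For an ideal $I$ write $V_I=\{v\in\ff_q^n:\supp v\subseteq I\}$, a subspace of dimension $|I|$, and set $a_I=\dim(\cc\cap V_I)$, the dimension of the largest subcode supported inside $I$. The first observation is that $B_I(x)=x+V_I$, so $B_I(\cc)=\cc+V_I$ is a linear space of dimension $k+|I|-a_I$, and two codewords $c,c'$ satisfy $B_I(c)=B_I(c')$ exactly when $c-c'\in\cc\cap V_I$, their neighborhoods being disjoint otherwise. Hence the distinct $I$-neighborhoods of codewords are the $q^{\,k-a_I}$ cosets of $\cc\cap V_I$ in $\cc$, and they are pairwise disjoint.

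From this I would extract the key equivalence: $\cc$ forms an $I$-tiling precisely when $a_I=1$, the parts being the cosets of $\cc\cap V_I$, one per distinct neighborhood. Indeed, codewords in a common coset share a neighborhood and must lie in the same part, so every admissible partition coarsens the coset partition; matching the prescribed number $q^{k-1}$ of equal parts against the $q^{\,k-a_I}$ distinct neighborhoods forces $a_I=1$. The tiling is moreover \emph{perfect} iff $B_I(\cc)=\ff_q^n$, i.e. iff $k+|I|-a_I=n$; when $a_I=1$ this reads $|I|=n-k+1$. The remaining ingredient is the dictionary $d_t(\cc)=\min\{|I|:I\text{ an ideal with }a_I\ge t\}$, which is immediate from the definition of the generalized poset weight, since $\ip{\supp\cd}$ is itself an ideal and any $I$ with $a_I\ge t$ contains a $t$-dimensional subcode.

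For the forward implication, suppose $\cc$ is NMDS, so $d_1=n-k$ and $d_2=n-k+2$. For any $I$ with $|I|=n-k+1$ the bound $\dim(\cc+V_I)\le n$ gives $a_I\ge k+|I|-n=1$, while $d_2>|I|$ forbids a $2$-dimensional subcode supported in $I$, so $a_I\le1$; thus $a_I=1$, and since $|I|=n-k+1$ the tiling is perfect, giving (1). For (2), a minimum-weight codeword $c$ has $|\ip{\supp c}|=n-k$, and $I=\ip{\supp c}$ yields $a_I=1$ (again $a_I\ge1$ trivially, $a_I\le1$ because $d_2>n-k$), so a size-$(n-k)$ tiling exists; no ideal of size $<n-k$ carries a nonzero codeword because $d_1=n-k$, so every smaller ideal has $a_I=0$ and admits no such tiling, establishing minimality. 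For the converse I would run the equivalences backwards: condition (1) forces $a_I\le1$ for every size-$(n-k+1)$ ideal, so $d_2\ge n-k+2$, and the generalized Singleton bound of Lemma~\ref{lemma:dr}(2) gives equality; condition (2) supplies a size-$(n-k)$ ideal with $a_I=1$, whence $d\le n-k$, and its minimality clause gives $d\ge n-k$. Together $d=n-k$ and $d_2=n-k+2$ are exactly the definition of NMDS.

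The step I expect to be the crux is the middle equivalence, pinning the purely combinatorial notion of an $I$-tiling to the invariant $a_I=\dim(\cc\cap V_I)$. One must argue that disjointness of neighborhoods forces any valid partition to respect the cosets of $\cc\cap V_I$, that the prescribed count $q^{k-1}$ of equal parts then singles out $a_I=1$ exactly rather than a larger or degenerate ($a_I=0$) configuration, and that perfectness is equivalent to the dimension count $k+|I|-a_I=n$. Once this translation is in place, both directions collapse to the elementary weight bookkeeping above.
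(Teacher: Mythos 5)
Your route is genuinely different from the paper's. The paper works through the dual code: it computes the rank of the submatrix $H[I]$ of the parity-check matrix to get a one-dimensional kernel, invokes the fact that $\cc$ is an orthogonal array of strength $k-1$ and index $q$ with respect to the dual poset (Lemma~\ref{lem:oa-poset}), and reads off the tiling from the fibers of the restriction map to $I^c$; the converse likewise passes through the dual distance ($d(\cc^\perp)\in\{k,k+1\}$, ruling out the MDS case) and Wei-type duality. You instead stay entirely on the primal side, packaging everything into the invariant $a_I=\dim(\cc\cap V_I)$ and the dictionary $d_t(\cc)=\min\{|I|:a_I\ge t\}$. That dictionary is correct (for a $t$-dimensional $\cd$ the ideal $\ip{\supp\cd}$ has $a_I\ge t$, and conversely), your dimension counts $\dim(\cc+V_I)=k+|I|-a_I$ and the coset description of the distinct neighborhoods are right, and the resulting proof is shorter and avoids both the orthogonal-array lemma and Lemma~\ref{lemma:dr}(3). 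This is a legitimate simplification.

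There is, however, one step where your argument as written does not close, and it is exactly the step you flag as the crux. Your counting argument shows only that any admissible partition coarsens the partition into cosets of $\cc\cap V_I$, which yields $q^{k-1}\le q^{k-a_I}$, i.e.\ $a_I\le 1$; it does \emph{not} force $a_I\ge 1$. If $a_I=0$ then all $q^k$ individual neighborhoods $B_I(c)$ are pairwise disjoint and \emph{every} partition into equal parts has disjoint part-neighborhoods, so under the literal definition of an $I$-tiling the code tiles for every such $I$ (including $I=\emptyset$). This breaks the two places where you need ``tiling $\Rightarrow a_I\ge1$'': the minimality clause in the forward direction of condition (2), and the deduction $d\le n-k$ from condition (2) in the converse. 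To repair it you must read the definition of an $I$-tiling as requiring each part's neighborhood to be a single tile $B_I(c)$ (equivalently, each part is contained in a coset of $\cc\cap V_I$, so that the parts are precisely the cosets of a one-dimensional subcode supported on $I$); under that reading $a_I=1$ is forced and your proof goes through verbatim. In fairness, the paper's own definition has the same looseness and its proof implicitly adopts the stronger reading (its converse asserts that an MDS code would violate condition (2), which is only true under that reading), so this is a defect you inherited rather than introduced --- but since you identified the point as the crux, you should supply the strengthened interpretation explicitly rather than claim the part count alone ``forces $a_I=1$.''
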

\begin{proof} Let $\cc$ be NMDS and let $I$ be an ideal of size $n-k+1$. 
Let $H[I]$ be the submatrix of the parity-check matrix $H$ of $\cc$
obtained from $H$ by deleting all the columns not in $I$.
Since $\rank (H[I])=n-k,$ the space $\ker(H[I])$ is one-dimensional.
Let $\cc_1=\ker(H(I))$ and let $\cc_j$ be the $j$th coset of $\cc_1$ in $\cc,$
$j=2,\dots,q^{k-1}.$ By Lemma \ref{lem:oa-poset} the code $\cc$ forms an orthogonal array of strength
$k-1$ and index $q$ in $\lp.$ Therefore, every vector $z\in \ff_q^{k-1}$
appears exactly $q$ times in the restrictions of the codevectors $c\in \cc$
to the coordinates of $J=I^c.$
Thus, $c'[J]=c''[J]$ for any two vectors 
$c',c''\in \cC_i, i=1,\dots,q^{k-1}$ and $c'[J]\ne c''[J]$
$c'\in \cc_i,c''\in \cc_j, 1\le i<j\le q^{k-1}.$ This implies that 
$\cc$ forms a perfect $I$-tiling, which proves assumption 1 of the theorem.
To prove assumption 2, repeat the above argument taking $I$ to be the support
of a minimum-weight codeword in $\cc.$

To prove the converse, let $I\subseteq\rp,$  
$|I|=n-k+1$ be an ideal and let $\cc_1,\dots,\cc_{q^{k-1}}$ be a partition of 
$\cc$ with
$|\cc_i|=q$ for all $i,$ that forms a perfect $I-$tiling.
This implies that 
$c'[I^c]\ne c''[I^c]$, $c'\in \cc_i,c''\in \cc_j, 1\le i<j\le q^{k-1}.$
In other words, $\cc$ forms an orthogonal array with respect
to $\lp$ of index $q$ and strength $k-1.$ We conclude that $d(\cc^\perp)=k$
or $k+1.$ If it is the latter, then $\cc^\perp$ is MDS with respect to
$\lp$ and so is $\cc$ with respect to $\rp,$ in violation of assumption 2. 
So $d(\cc^\perp)=k$ and $d(\cc)\le n-k.$ 
If the inequality is strict, there exists an ideal $I$ of size $< n-k$
that supports a one-dimensional subcode of $\cc$. Then $\cc$ forms
an $I$-tiling which contradicts assumption 2.

It remains to prove that $d_2(\cc)= n-k+2$.
Assume the contrary, i.e., that there exists a $2$-dimensional subcode
$\cB\subset \cc$ whose l.a. support forms an ideal $I\subset \rp$ of
size $n-k+1.$ The $q^2$ vectors of $\cB$ all have zeros in $I^c$ which
contradicts the fact that $\cc$ forms an orthogonal array of index $q$.
\end{proof}

Next, we use this characterization to relate codes in the ordered
Hamming space $\ff_q^{r,n}$ to distributions.
An idealized uniformly distributed point set $\cc$ would satisfy the
property that
for any measurable subset $A\subset U^n$, 
  $$
    \frac1{|\cc|}\sum_{x\in \cc} 1(x\in A)=\text{vol}(A).
  $$
Distributions that we consider, and in particular $(t,m,n)$-nets,
approximate this property by restricting the subsets $A$ to be boxes
with sides parallel to the coordinate axes.

Let $$\ce \triangleq \left\{E = \prod_{i=1}^n \left[\frac{a_i}{q^{d_i}},
\frac{a_i+1}{q^{d_i}} \right): 0\le a_i<q^{d_i}, 0\le d_i\le r, 1\le i\le
n\right\}$$ 
be a collection of elementary intervals in the unit cube $U^n = [0,1)^n$.
An arbitrary collection
of $q^k$ points in $U^n$ 
is called an \emph{$[nr,k]$ distribution} in the base $q$ (with respect to 
$\ce$). 
A distribution is called \emph{optimal} if every
elementary interval of volume $q^{-k}$ contains exactly one point
\cite{skr01}.
A related notion of $(t,m,n)$-nets, introduced by Niederreiter \cite{nie86}, is obtained if
we remove the upper bound on $l_i$ (i.e., allow that $0\le l_i < \infty$) 
and require that every elementary interval of volume $q^{t-m}$ contain exactly 
$q^t$ points.

An ordered code gives rise to a distribution of points in the unit cube via
the following procedure. A codevector
$(c_{11},\dots,c_{1r},\dots,c_{n1},\dots,c_{nr}) \in \ff_q^{r,n}$ is mapped
to $x=(x_1,\dots,x_n) \in U^n$  by letting 
\begin{equation}
\label{eq:nrt-cube}
x_i = \sum_{j=1}^r c_{ij}
q^{j-r-1}, 1\le i\le n. 
\end{equation}
In particular, an $(m-t,n,r,q)$ OOA of index $q^t$ and size $q^m$ corresponds to
a distribution in which every elementary interval of volume $q^{t-m}$
contains exactly $q^t$ points, and an $(m-t,n,m-t,q)$
OOA of index $q^t$ and size $q^m$ gives rise to a $(t,m,n)$-net \cite{law96a,mul96}.

\begin{proposition} {\rm (Skriganov \cite{skr01})}  An $[nr,k,d]$ MDS code 
in the ordered metric
exists if and only if there exists an optimal $[nr,k]$ distribution.
\end{proposition}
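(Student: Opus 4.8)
The plan is to prove both directions by translating between the MDS condition on the code and the ``exactly one point per elementary interval'' condition on the distribution, using the ordered-metric machinery already developed, in particular Theorem~\ref{thm:nmds-equiv} and Lemma~\ref{lem:oa-poset}. The key observation is that an elementary interval $E\in\ce$ of volume $q^{-k}$ is determined by a choice of exponents $(d_1,\dots,d_n)$ with $\sum_i d_i=k$ together with a base point, and under the map \eqref{eq:nrt-cube} the condition ``$x$ lies in $E$'' says precisely that the leading $d_i$ coordinates $c_{i1},\dots,c_{i d_i}$ of the $i$th block are prescribed. Thus an elementary interval of volume $q^{-k}$ corresponds exactly to fixing the values of the codevector on a left-adjusted ideal $I\subset\rp$ of size $|I|=k$, and ``$E$ contains exactly one of the $q^k$ points'' is the statement that the restriction map $\cc\to\ff_q^I$, $c\mapsto c[I]$, is a bijection.

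First I would set up this dictionary carefully: a distribution is optimal if and only if for every l.a.\ ideal $I$ with $|I|=k$ the restriction $c\mapsto c[I]$ is a bijection from $\cc$ onto $\ff_q^{I}\cong\ff_q^k$. Because $\cc$ is linear of dimension $k$ and $|\cc|=q^k$, this bijection is equivalent to injectivity, i.e.\ to the statement that no nonzero codeword vanishes on $I$; equivalently, that $\cc$ restricted to the complementary coordinates $I^c$ together with $I$ ``sees'' every pattern. Next I would recast optimality as an orthogonal-array statement: saying that every pattern on every l.a.\ ideal of size $k$ occurs exactly once among the $q^k$ codewords is precisely the condition that $\cc$ is an OOA of strength $k$ and index $1$ with respect to $\rp$. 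By the duality between codes and OOAs (Lemma~\ref{lem:oa-poset} applied to the pair $(\cc^\perp,\cc)$), $\cc$ being an OOA of strength $k$ is equivalent to $d(\cc^\perp)\ge k+1$ in the dual poset $\lp$, and combined with the Singleton bound $d(\cc^\perp)\le (n-k)-(n-k)+1=\dots$ forces $\cc^\perp$, hence $\cc$, to be MDS with $d(\cc)=n-k+1$.

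For the forward direction I would start from an $[nr,k,n-k+1]$ MDS code. MDS means $d(\cc)=n-k+1$, so no nonzero codeword has l.a.\ support contained in any ideal of size $n-k$; dually this makes $\cc$ an OOA of strength $k$ and index $1$, which by the dictionary above says exactly that each elementary interval of volume $q^{-k}$ contains exactly one point. For the converse, given an optimal distribution I would run the same equivalences backward: optimality gives the strength-$k$ index-$1$ OOA property, which via Lemma~\ref{lem:oa-poset} yields $d(\cc^\perp)=k+1$ in $\lp$, and then the generalized Singleton bound of Lemma~\ref{lemma:dr}(2) forces equality throughout, so $\cc$ meets the Singleton bound and is MDS.

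The step I expect to be the main obstacle is verifying the dictionary cleanly, namely that membership of the image point in a volume-$q^{-k}$ elementary interval translates, with no off-by-one errors in the exponents $d_i$ of \eqref{eq:nrt-cube}, into prescribing exactly a left-adjusted ideal of size $k$. The negative power $q^{j-r-1}$ and the half-open intervals must be tracked so that fixing $d_i$ leading symbols of block $i$ corresponds to the interval $[a_i/q^{d_i},(a_i+1)/q^{d_i})$, and so that the counting of how many codewords land in a given box is exactly an OOA-index computation. Once this correspondence is established, the remaining implications are immediate consequences of the OOA/Singleton duality already proved in the excerpt.
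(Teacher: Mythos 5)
The paper itself offers no proof of this proposition --- it is quoted from Skriganov \cite{skr01} --- so there is nothing internal to compare against; the closest analogue is the paper's own treatment of the NMDS case (Theorem \ref{thm:nmds-equiv} combined with Theorem \ref{thm:nmds-cube}), and your overall strategy (elementary intervals $\leftrightarrow$ ideals, optimality $\leftrightarrow$ an orthogonal-array condition, then duality and the Singleton bound) is exactly the right one and parallels that treatment.

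However, the dictionary at the heart of your argument is stated with the wrong orientation, and you yourself flag this as the step most likely to go wrong. Under \eqref{eq:nrt-cube} the digit $c_{ij}$ carries weight $q^{j-r-1}$, so $c_{ir}$ is the \emph{most} significant digit of $x_i$ and $c_{i1}$ the least. Membership of $x_i$ in $[a_i/q^{d_i},(a_i+1)/q^{d_i})$ therefore prescribes $c_{ir},c_{i,r-1},\dots,c_{i,r-d_i+1}$, i.e.\ a right-adjusted segment of the $i$th chain, not the coordinates $c_{i1},\dots,c_{id_i}$ as you wrote. A volume-$q^{-k}$ elementary interval thus corresponds to an ideal of the \emph{dual} poset $\lp$ of size $k$, and optimality says that $\cc$ is an OOA of strength $k$ and index $1$ with respect to $\lp$, not $\rp$ --- this is consistent with the paper's proof of Theorem \ref{thm:nmds-equiv}, where $\cc$ is an OOA ``in $\lp$.'' With the correct orientation the rest goes through cleanly and in fact more directly than you indicate: a nonzero codeword of $\rp$-weight $w$ vanishes on the r.a.\ ideal $\ip{\supp c}^c$ of size $nr-w$, so index $1$ at strength $k$ forces $w\ge nr-k+1$ and conversely; there is no need to route through the MDS-ness of $\cc^\perp$ (your Singleton computation for $d(\cc^\perp)$ is garbled as written; it should read $d(\cc^\perp)\le nr-(nr-k)+1=k+1$).

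A second, smaller gap is in the converse. An optimal $[nr,k]$ distribution is a priori just a set of $q^k$ points in $U^n$, not the image of a linear code, so one cannot simply ``run the equivalences backward'': one must first manufacture a code by truncating each coordinate to its first $r$ base-$q$ digits, check that optimality forces the resulting $q^k$ vectors to be distinct with pairwise ordered distance at least $nr-k+1$, and note that this yields an $(nr,q^k,nr-k+1)$ code that need not be linear. Since the paper's bracket notation $[nr,k,d]$ denotes a linear code, either the proposition must be read in the nonlinear sense (as Skriganov's framework allows) or an additional argument is needed; your sketch does not address this.
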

Skriganov \cite{skr07} also considers the concept of {\em nearly-MDS} codes
whose distance asymptotically tends to the distance of MDS codes, and shows
how these codes can give rise to distributions.

The next theorem whose proof is immediate from Theorem \ref{thm:nmds-equiv}
relates ordered NMDS codes and distributions.
\begin{theorem} \label{thm:nmds-cube}
Let $\cc$ be a linear $[nr,k,d]$ code in $\ff_q^{r,n}$ and let
$P(\cc)$ be the corresponding set of points in $U^n$. Then $\cc$ is NMDS
if and only if
\begin{enumerate}
\item Any elementary interval of volume $q^{-(k-1)}$ has exactly $q$
    points of $P(\cc).$
\item There exists an elementary interval $\prod_{i=1}^n
        \left[0,q^{-l_i}\right)$ of volume $q^{-k}$ containing exactly $q$
        points and no smaller elementary intervals of this form 
        containing exactly $q$ points exist.
\end{enumerate}
\end{theorem}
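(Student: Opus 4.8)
The plan is to transfer the combinatorial characterization of Theorem~\ref{thm:nmds-equiv} across the code-to-cube correspondence~\eqref{eq:nrt-cube}, so the entire content of the proof is a dictionary between ideals in the ordered poset $\rp$ and elementary intervals in $\ce$. The key observation is that an elementary interval of the special form $\prod_{i=1}^n\left[\frac{a_i}{q^{l_i}},\frac{a_i+1}{q^{l_i}}\right)$ is cut out precisely by fixing the first $l_i$ digits $c_{i1},\dots,c_{il_i}$ in block $i$ of the codevector, while leaving the remaining $r-l_i$ digits free. Under the weight $\wt(x)=\sum_i\max(j:x_{ij}\neq0)$, the coordinates whose values are constrained by such an interval form an l.a. ideal $I$ of the ordered poset whose shape has $l_i$ occupied coordinates in block $i$; hence $|I|=\sum_i l_i$, and the volume of the interval is $q^{-|I|}$. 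Thus elementary intervals of volume $q^{-m}$ correspond bijectively to l.a. ideals $I$ with $|I|=m$, and two codevectors $c',c''$ lie in the same interval if and only if $c'[I]=c''[I]$, i.e. if and only if $c'-c''$ has support contained in $I^c$.

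With this dictionary in hand, I would argue as follows. First I would show that ``interval of volume $q^{-|I|}$ contains exactly $\theta$ points of $P(\cc)$'' is equivalent to ``each coset representative pattern on the coordinates of $I$ is attained by exactly $\theta$ codevectors'', which is exactly the statement that the $I$-neighborhoods of a suitable partition of $\cc$ tile their images. Concretely, an interval contains exactly $q$ points of $P(\cc)$ iff the restriction map $\cc\to\ff_q^{|I|}$, $c\mapsto c[I]$, has all nonempty fibers of size $q$; grouping the codevectors by their image under this map produces exactly the partition $\cc=\cc_1\cup\dots\cup\cc_{q^{k-1}}$ into the parts whose $I$-neighborhoods are disjoint (and cover $U^n$ when $|I|=n-k+1$ so that every pattern in $\ff_q^{|I|}$ is hit). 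This makes condition~(1) of Theorem~\ref{thm:nmds-cube} word-for-word equivalent to the perfect $I$-tiling property of Theorem~\ref{thm:nmds-equiv}(1) for every ideal of size $n-k+1$, and condition~(2) equivalent to the existence of a minimal-size ideal (of size $n-k$) admitting an $I$-tiling, which is Theorem~\ref{thm:nmds-equiv}(2).

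The only subtlety I would flag is the restriction to intervals of the anchored form $\prod_{i=1}^n[0,q^{-l_i})$ appearing in condition~(2): this is harmless because the tiling property for an ideal $I$ is invariant under translating each interval (the partition of $\cc$ into fibers of $c\mapsto c[I]$ does not depend on which coset representative pattern we name), and because every l.a. ideal of the ordered poset has a shape realized by some anchored interval. I would also note that by Theorem~\ref{thm:nmds-equiv} the relevant ideals are automatically l.a.\ in $\rp$, which under the ordered metric means exactly that the constrained digits in each block form an initial segment $c_{i1},\dots,c_{il_i}$; this is precisely what makes the constraint region an elementary interval rather than an arbitrary coordinate box. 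Given these remarks, no genuine obstacle remains---the statement follows by applying Theorem~\ref{thm:nmds-equiv} and translating each of its two conditions through the dictionary above---which is why the excerpt can correctly assert that the proof is immediate.
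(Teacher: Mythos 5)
Your overall strategy is exactly the paper's: Theorem \ref{thm:nmds-cube} is obtained by pushing the two conditions of Theorem \ref{thm:nmds-equiv} through the correspondence (\ref{eq:nrt-cube}), and making the ``dictionary'' explicit is the right way to do it. However, the dictionary as you state it is wrong in a way that matters. Under (\ref{eq:nrt-cube}) the digit $c_{ij}$ carries the weight $q^{j-r-1}$, so the \emph{most} significant base-$q$ digit of $x_i$ is $c_{ir}$, not $c_{i1}$. An elementary interval $\prod_i [a_i/q^{l_i},(a_i+1)/q^{l_i})$ therefore fixes $c_{ir},c_{i,r-1},\dots,c_{i,r-l_i+1}$, a terminal segment of each chain --- that is, a \emph{right}-adjusted set (an ideal of $\lp$), not an l.a.\ ideal of $\rp$. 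Fixing $c_{i1},\dots,c_{il_i}$, as you propose, does not cut out an elementary interval at all (in coordinate $i$ it cuts out a disjoint union of $q^{r-l_i}$ short intervals). Consequently an elementary interval of volume $q^{-m}$ corresponds not to an l.a.\ ideal of size $m$ but to the \emph{complement} of an l.a.\ ideal of size $nr-m$: condition (1) (volume $q^{-(k-1)}$) matches the perfect $I$-tiling of Theorem \ref{thm:nmds-equiv}(1) for l.a.\ ideals $I$ with $|I|=nr-k+1$, whose complements are exactly the r.a.\ sets of size $k-1$ on which $\cc$ is an orthogonal array of index $q$ with respect to $\lp$; and the anchored interval of volume $q^{-k}$ in condition (2) corresponds to the l.a.\ ideal of size $nr-k=d$ supporting a one-dimensional subcode. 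Taken literally, your identification turns condition (1) into the statement that $\cc$ is an orthogonal array of strength $k-1$ with respect to $\rp$ rather than $\lp$, which is not what NMDS-ness gives and not what Theorem \ref{thm:nmds-equiv} provides.

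The same complementation error reverses your reading of the minimality clause: a smaller elementary interval (more constrained digits) corresponds to a \emph{smaller} complementary l.a.\ ideal, which is what lines up with ``no smaller-sized ideals with this property exist'' in Theorem \ref{thm:nmds-equiv}(2); under your size convention the comparison would point the wrong way. No new idea is needed to repair this --- swap the roles of $I$ and $I^c$ throughout, i.e., read the constrained digits as an r.a.\ set and take the tiling ideal to be its complement --- and your argument becomes precisely the intended one. But as written, the central translation step is incorrect.
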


\begin{corollary}
An $[nr,k,d]$ NMDS code $\cc$ in the ordered Hamming space forms a 
$(k-1,n,r,q)$ OOA of index $q$. The corresponding distribution
$P(\cc)\subset U^n$
forms a $(k-r,k,n)$-net for $k-1\ge r$. 
\end{corollary}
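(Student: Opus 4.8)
The plan is to treat the two assertions in turn: the orthogonal-array claim comes straight out of the duality theory of the previous section, and the net claim is then obtained by feeding that OOA into the OOA-to-net dictionary stated after \eqref{eq:nrt-cube}. Throughout I keep in mind that, under the map \eqref{eq:nrt-cube}, an elementary-interval constraint fixes the most significant base-$q$ digits of each coordinate, which are the r.a. (suffix) positions of each block; so the relevant reading of columns is with respect to the dual poset $\lp$ (for the ordered space $\rp\cong\lp$, so the notation $(k-1,n,r,q)$ is unambiguous as a structure type).

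First I would establish the OOA. Since $\cc$ is NMDS, Lemma~\ref{lem:nmds}(2) gives that $\cc^\perp$ is NMDS with minimum $\lp$-distance $d(\cc^\perp)=k$. Applying Lemma~\ref{lem:oa-poset} to the code $\cc^\perp$ (linear in the poset $\lp$) shows that its dual $(\cc^\perp)^\perp=\cc$ is an orthogonal array of strength $d(\cc^\perp)-1=k-1$ with respect to $\lp$, i.e.\ with respect to the r.a.\ ideals. Counting supplies the index: $\cc$ has $q^k$ vectors and its projection onto any $k-1$ r.a.\ columns is onto $\ff_q^{k-1}$, so each value occurs $q^{k}/q^{k-1}=q$ times. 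Hence $\cc$ is a $(k-1,n,r,q)$ OOA of index $q$; this is also exactly the distribution statement of Theorem~\ref{thm:nmds-cube}(1).

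For the net I would first weaken the strength. Because $k-1\ge r$, a strength-$(k-1)$ OOA is a fortiori a strength-$r$ OOA: any $r$ r.a.\ columns extend to $k-1$ r.a.\ columns, and projecting the strength-$(k-1)$ property down multiplies the index by $q^{(k-1)-r}$, so $\cc$ is also an $(r,n,r,q)$ OOA, now of index $q^{k-r}$ and size $q^{k}$. This is precisely the shape $(m-t,n,m-t,q)$ with $m=k$ and $t=k-r$, so the OOA-to-net dictionary yields a $(k-r,k,n)$-net. Concretely, a $(k-r,k,n)$-net requires every elementary interval of volume $q^{-r}$ to contain exactly $q^{k-r}$ points; such an interval has $\sum_i d_i=r$, which (all $d_i\ge0$) forces each $d_i\le r$, so the constraints lie within the $r$ digits of precision of $P(\cc)$ and amount to fixing $\sum_i d_i=r$ r.a.\ columns. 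As $r\le k-1$ is within the OOA strength, exactly $q^{k}/q^{r}=q^{k-r}$ code points meet them. En route I would also note that \eqref{eq:nrt-cube} is injective on $\ff_q^{r,n}$, so $|P(\cc)|=q^{k}=q^{m}$ and the net has the correct cardinality.

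The main obstacle is the interplay between the finite precision of $P(\cc)$ and the net's admission of elementary intervals with arbitrarily large $d_i$. The hypothesis $k-1\ge r$ is exactly what tames this: it forces the target net to have volume $q^{-r}$, so that $\sum_i d_i=r$ automatically bounds every $d_i$ by $r$ and no "beyond-precision" intervals (which would hold $0$ points) can arise, and simultaneously it guarantees the OOA strength $k-1$ is large enough to control all $r$ digit constraints at once. Verifying this bookkeeping is the only delicate step; everything else is a direct application of Lemmas~\ref{lem:oa-poset} and \ref{lem:nmds} together with the stated dictionary.
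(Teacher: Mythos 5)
Your proof is correct and follows exactly the route the paper intends (the corollary is left without an explicit proof there): the OOA claim comes from $d(\cC^\perp)=k$ via Lemma~\ref{lem:nmds} together with Lemma~\ref{lem:oa-poset} (equivalently, part (1) of Theorem~\ref{thm:nmds-cube}), and the net claim from weakening the strength to $r$ and invoking the stated $(m-t,n,m-t,q)$-OOA-to-$(t,m,n)$-net dictionary with $m=k$, $t=k-r$. Your bookkeeping on why $k-1\ge r$ is exactly the needed hypothesis is accurate.
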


\begin{remark} Distributions of points in the unit 
cube obtained from NMDS codes have properties similar to those
of distributions obtained from MDS codes. In particular, the points
obtained from an $[nr,k,d]$ MDS code in $\ff_q^{r,n}$ satisfy part (1) of
Theorem \ref{thm:nmds-cube} and give rise to a $(k-r,k,n)$-net for
$k\ge r$ \cite{skr01}.
\end{remark}

\section{Weight distribution of NMDS codes}
\label{sec:wt}
 Let $\Omega(I)$ be the set of maximal elements of an
ideal $I$ and let $\tilde{I} \triangleq I \setminus \Omega(I)$. 

Let $\cc$ be
an NMDS $[n,k,d]$ linear poset code. Let 
$A_I \triangleq \{x \in \cc: \ip{\supp x} = I \}$ be the number of codewords
with l.a. support exactly $I$ and let $A_s = \sum_{I: |I| = s}
A_I$. 
\begin{theorem}
\label{thm:enum} The weight distribution of $\cc$ has the following form: 
\begin{multline}
A_s = \sum_{I \in \cI_s} \sum_{l=0}^{s-d-1} (-1)^l \binom{|\Omega(I)|}l
(q^{s-d-l}-1) + (-1)^{s-d} \sum_{I\in\cI_s} \sum_{J\in\cI_d(I),
    J\supseteq \tilde{I}} A_J, \\ n\ge s\ge d,
\label{eq:enum}
\end{multline}
where $\cI_s \triangleq \{I \subseteq \rp: |I| = s\}$ and 
$\cI_s(I) \triangleq \{J: J\subseteq I,
|J| = s\}.$
\end{theorem}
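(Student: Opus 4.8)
The plan is to compute, for each ideal $I$ of size $s$, the number $A_I$ of codewords with l.a.\ support exactly $I$, and then sum over $I\in\cI_s$. The natural building block is $B_I\triangleq|\{c\in\cc:\supp c\subseteq I\}|$, the number of codewords supported inside $I$; this set is $\ker(H[I])$ (extended by zeros), so $B_I=q^{|I|-\rank H[I]}$. The NMDS conditions of Lemma \ref{lem:nmds}(1) pin down $\rank H[I]$ in three regimes: for $|I|\le d-1$ the columns of $H[I]$ are independent, so $B_I=1$; for $|I|\ge d+1$ they have full rank $n-k=d$ (any ideal of size $\ge d+1$ contains a sub-ideal of size $d+1$ whose columns already span), so $B_I=q^{|I|-d}$; and for $|I|=d$ the rank is $d$ or $d-1$, so $B_I\in\{1,q\}$. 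In that last case any nonzero word of $\ker(H[I])$ has weight at most $d$, hence exactly $d$, with l.a.\ support exactly $I$; this yields the crucial identity $B_I=1+A_I$ whenever $|I|=d$.

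Next I would pass from ``supported inside $I$'' to ``l.a.\ support exactly $I$'' by inclusion--exclusion over the maximal elements of $I$. The two facts needed are: (i) $\ip{\supp c}=I$ holds iff $\supp c\subseteq I$ and $\supp c$ meets every maximal element, i.e.\ $\Omega(I)\subseteq\supp c$; and (ii) for any $T\subseteq\Omega(I)$ the set $I\setminus T$ is again an ideal, since deleting maximal elements preserves downward closure. Forcing the maximal coordinates in $T$ to vanish then gives
\begin{equation*}
A_I=\sum_{T\subseteq\Omega(I)}(-1)^{|T|}B_{I\setminus T}.
\end{equation*}

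Finally I would substitute the three evaluations of $B_{I\setminus T}$ according to $|I\setminus T|=s-|T|$, writing $\omega=|\Omega(I)|$ and grouping by $l=|T|$. The terms with $l\le s-d-1$ give $\sum_l(-1)^l\binom{\omega}{l}q^{s-d-l}$; the terms with $l\ge s-d+1$ each contribute the constant $1$; and the single boundary value $l=s-d$ is where $B_{I\setminus T}=1+A_{I\setminus T}$ injects the $A$-terms. Here the map $T\mapsto J=I\setminus T$ is a bijection between $(s-d)$-subsets of $\Omega(I)$ and ideals $J\in\cI_d(I)$ with $J\supseteq\tilde I$, producing exactly the second double sum of \eqref{eq:enum}. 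Collecting the purely numerical part and invoking $\sum_{l=0}^{\omega}(-1)^l\binom{\omega}{l}=0$ (valid because $I\neq\emptyset$, so $\omega\ge1$) collapses all the ``constant $1$'' contributions and converts each $q^{s-d-l}$ into $q^{s-d-l}-1$; summing over $I\in\cI_s$ then gives \eqref{eq:enum}.

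The step I expect to be the real content is the boundary regime $|I\setminus T|=d$: away from it the rank of $H[I]$ is forced by the NMDS hypotheses and $B_{I\setminus T}$ is a clean power of $q$, but exactly at size $d$ the code may or may not possess a minimum-weight word supported there. It is this ambiguity --- captured by $B=1+A$ and reorganized through the bijection $T\leftrightarrow J$ --- that is responsible for the residual $A_J$ term. The rest is routine binomial bookkeeping, which should be carried out with the convention $\binom{\omega}{l}=0$ for $l>\omega$, so that the identity also covers ideals with few maximal elements ($\omega<s-d$), for which the $A_J$ sum is empty.
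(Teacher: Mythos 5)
Your proof is correct and follows essentially the same route as the paper: an inclusion--exclusion over the sub-ideals of $I$ obtained by deleting maximal elements, with the boundary case $|J|=d$ (where the number of codewords supported in $J$ is $1+A_J$ rather than a forced power of $q$) producing the residual $A_J$ terms. The only cosmetic difference is that you evaluate the ball counts via the rank of $H[J]$ from Lemma \ref{lem:nmds}(1), whereas the paper reads them off from the orthogonal-array property of $\cc$ with respect to $\lp$; these are dual formulations of the same fact.
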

{\sc Proof.} 
The computation below is driven by the fact
that ideals are fixed by the sets of their maximal elements.
Additionally, we use the fact that any $k-1$ r.a. coordinates of the code
$\cc$ support an orthogonal array of strength $k-1.$ 

The number of codewords of weight $s$ is given by
$A_s=|\cup_{I\in \cI_s} \cc\cap S_I|,$ where 
$S_I \triangleq \{x\in\ff_q^n: \ip{\supp x} = I\}$ is the sphere with
l.a. support exactly $I$. 
 The above expression can be written as 
$$
\Big|\bigcup_{I\in \cI_s} \cc\cap S_I\Big| = \sum_{I\in\cI_s} 
\Big( \left|\cc \cap B_I^\ast\right|
        - \big| \bigcup_{J\in\cI_{s-1}(I)} \cc\cap B_J^\ast\big|\Big),
$$
where $B_I \triangleq \{x\in\ff_q^n: {\ip{\supp x}}_{\rp}\subseteq I$\}
and $B_I^\ast \triangleq B_I \setminus \mathbf{0}.$
We determine the cardinality of the last term using the inclusion-exclusion
principle. 
\begin{multline}
\label{eq:incl-excl}
\Big| \bigcup_{J\in\cI_{s-1}(I)} \cc\cap B_J^\ast\Big| = \sum_{J\in\cI_{s-1}(I)}
    |\cc\cap B_J^\ast| - \sum_{J_1\ne J_2\in \cI_{s-1}(I)} |\cc\cap B_{J_1}^\ast \cap
    B_{J_2}^\ast| + \cdots \\
        +(-1)^{|\Omega(I)|-1} \sum_{J_1\ne \cdots\ne
        J_{|\Omega(I)|} \in\cI_{s-1}(I)} \bigg| \cc \cap \Big( \bigcap_{i} 
B_{J_i}^\ast \Big) \bigg|.
\end{multline}
Since $\cC^\perp$ has minimum distance $k$, 
$\cc$ forms an orthogonal array of strength $k-1$ with respect to 
the dual poset $\lp$. 
This provides
us with an estimate for each individual term in (\ref{eq:incl-excl}) as
described below. For distinct $J_1,\dots,J_l\in\cI_{s-1}(I)$, 
we let $J\triangleq\cap_{i=1}^l J_i$. 
Using the fact that $J$ does not contain $l$ maximal elements of $I,$ we get 
$$
\Big|\Big\{\{J_1,\dots,J_l\}: J_i \mbox{ distinct}, J_i \in\cI_{s-1}(I),
    i=1,\dots,l\Big\}\Big|
= \binom{|\Omega(I)|}l.
$$
For any $s\ge d+1$ consider the complement $I^c$ of an ideal
$I\in\ci_s$. Since $|I^c|\le n-d-1= k-1,$
the code $\cc$ supports an
orthogonal array of strength $n-s$ and index $q^{s-d}$
in the coordinates defined by $I^c.$
Since $\cap_{i=1}^l B_{J_i}^\ast = B_J^\ast$ and since $B_J^\ast$ does not
contain the $\mathbf{0}$ vector, we obtain
$$
\Big|\cc \cap \big( \bigcap_{i=1}^l B_{J_i}^\ast\big) \Big|
= q^{s-d-l}-1, \quad 1\le l\le s-d-1.
$$
Finally, for $l=s-d$ we obtain $|\cc\cap(\cap_{i=1}^l B_{J_i}^\ast)| = A_J$, and 
$$
\bigg| \bigcup_{J\in\cI_{s-1}(I)} \cc\cap B_J^\ast \bigg| = \sum_{l=1}^{s-d-1}
(-1)^{l-1} \binom{|\Omega(I)|}l(q^{s-d-l}-1)
    + \sum_{J\in\cI_d(I),J\supseteq \tilde{I}} (-1)^{s-d-1} A_J, 
$$
which implies
\begin{multline*}
\sum_{I\in\cI_s} |\cc\cap S_I| = \sum_{I\in\cI_s} \Bigg( (q^{s-d}-1)
- \bigg(
\sum_{l=1}^{s-d-1} (-1)^{l-1} \binom{|\Omega(I)|}l(q^{s-d-l}-1)
    \\+\sum_{J\in\cI_d(I),J\supseteq \tilde{I}} (-1)^{s-d-1} A_J  \bigg)\Bigg). 
\hspace*{.2in}\rule{2mm}{2mm}
\end{multline*}

As a corollary of the above theorem, we obtain the weight distribution of
NMDS codes in the ordered Hamming space $\ff_q^{r,n}$. 
By definition, the number of vectors of ordered weight $s$ in a code $\cc\in \ff_q^{r,n}$
equals 
   $
     A_s=\sum_{e:|e|'=s} A_e,
  $
where $A_e$ is the number of codevectors of shape $e.$

\begin{corollary}\label{cor:woc}
The weight distribution of an ordered NMDS code $\cc\in \ff_q^{r,n}$ is given by
\begin{multline}
\label{eq:enum-nrt}
A_s = \sum_{l=0}^{s-d-1} (-1)^l \left( \sum_{e:|e|'=s} \binom{|e|}l
        \binom{n}{e_0,\dots,e_r}\right) (q^{s-d-l} - 1)+
    \\+ (-1)^{s-d}\sum_{e:|e|'=d} N_s(e) A_e, \quad s=d,d+1,\dots,n,
\end{multline}
where
$$
N_s(e) \triangleq \sum_{f:|f|'=s} \binom{e_{r-1}}{f_r-e_r}
    \binom{e_{r-2}}{(f_r+f_{r-1}) - (e_r
            + e_{r-1})}\cdots\binom{e_0}{|f|-|e|}.
$$
\end{corollary}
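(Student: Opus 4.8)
The plan is to deduce \eqref{eq:enum-nrt} from the general poset formula \eqref{eq:enum} by translating both summations over ideals into summations over shapes. The enabling observation is that an ideal $I$ of the ordered poset $\rp$ is simply a choice of a prefix $\{1,\dots,i\}$ in each of the $n$ chains, so $I$ is determined up to the labelling of the chains by its shape $e=\shape(I)$, and the number of ideals of a fixed shape $e$ is the multinomial coefficient $\binom{n}{e_0,\dots,e_r}$. Since each nonempty chain of $I$ carries a single maximal element, one has $|\Omega(I)|=|e|$, while $|I|=|e|'$ and $\tilde I=I\setminus\Omega(I)$ is obtained by shortening every nonempty prefix by exactly one position.

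The first sum in \eqref{eq:enum} then translates directly. Grouping the $I\in\cI_s$ by shape and inserting $|\Omega(I)|=|e|$ together with the count $\binom{n}{e_0,\dots,e_r}$ converts $\sum_{I\in\cI_s}\binom{|\Omega(I)|}{l}$ into $\sum_{e:|e|'=s}\binom{|e|}{l}\binom{n}{e_0,\dots,e_r}$, which reproduces the first line of \eqref{eq:enum-nrt} term by term.

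The substance lies in the second sum $\sum_{I\in\cI_s}\sum_{J\in\cI_d(I),\,J\supseteq\tilde I}A_J$. First I would interchange the order of summation, rewriting it as $\sum_{J\in\cI_d}A_J\cdot\bigl|\{I\in\cI_s:\tilde I\subseteq J\subseteq I\}\bigr|$, and note that by the symmetry among the chains the inner count depends only on $\shape(J)=e$; since $A_e=\sum_{J:\,\shape(J)=e}A_J$, it then suffices to show that for one fixed $J$ of shape $e$ (with $|e|'=d$) the number of ideals $I\in\cI_s$ with $\tilde I\subseteq J\subseteq I$ equals $N_s(e)$. The key local remark is that the sandwich condition acts chain by chain: if $J$ has prefix $\{1,\dots,j\}$ and $I$ has prefix $\{1,\dots,i\}$ in a given chain, then $J\subseteq I$ forces $i\ge j$ while $\tilde I\subseteq J$ (that is $\{1,\dots,i-1\}\subseteq\{1,\dots,j\}$) forces $i\le j+1$, so $i\in\{j,j+1\}$. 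Passing from $J$ to $I$ therefore amounts to choosing, for each $j=0,\dots,r-1$, some number $g_j$ of the $e_j$ chains of type $j$ to extend to type $j+1$; this contributes a factor $\prod_{j=0}^{r-1}\binom{e_j}{g_j}$ and raises the weight by $\sum_j g_j$.

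Finally I would match this parametrization to the shape $f=\shape(I)$ that indexes $N_s(e)$. Writing $f_i=e_i-g_i+g_{i-1}$ and solving the telescoping recursion yields $g_{j}=\sum_{i=j+1}^{r}(f_i-e_i)$, so that $g_{r-1}=f_r-e_r$, $g_{r-2}=(f_r+f_{r-1})-(e_r+e_{r-1})$, and in general the factors $\binom{e_j}{g_j}$ become exactly the successive binomials in the definition of $N_s(e)$, down to $\binom{e_0}{|f|-|e|}$. The same telescoping gives $\sum_j g_j=\sum_{i=1}^r i(f_i-e_i)=|f|'-|e|'=s-d$, so the extension constraint is precisely $|f|'=s$, and $f\mapsto(g_j)$ is a bijection onto admissible extension vectors. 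Summing $\prod_j\binom{e_j}{g_j}$ over all $f$ with $|f|'=s$ then yields $N_s(e)$, and the reduction is complete. The one point I expect to demand care is the bookkeeping at the boundary: one must check that every shape $f$ produced by an admissible choice $0\le g_j\le e_j$ is genuinely realizable (here $\sum_i f_i=\sum_i e_i=n$ is preserved and all $f_i\ge0$), and conversely that any $f$ forcing some $g_j$ out of range contributes nothing because the corresponding binomial vanishes, so that extending the sum to \emph{all} shapes with $|f|'=s$ introduces no spurious terms.
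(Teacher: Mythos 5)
Your proposal is correct and follows essentially the same route as the paper: group ideals by shape (with the multinomial count and $|\Omega(I)|=|e|$), interchange the order of summation in the double sum, and count the ideals $I$ sandwiched as $\tilde I\subseteq J\subseteq I$ for a fixed $J$ of shape $e$. Your chain-by-chain extension argument with the vector $(g_j)$ and the telescoping identity $g_j=\sum_{i=j+1}^r(f_i-e_i)$ is exactly the computation the paper compresses into its figure and the phrase ``it is now readily seen,'' so you have in fact supplied the detail the paper omits.
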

\begin{figure}
\centering
\includegraphics[keepaspectratio,scale=0.52]{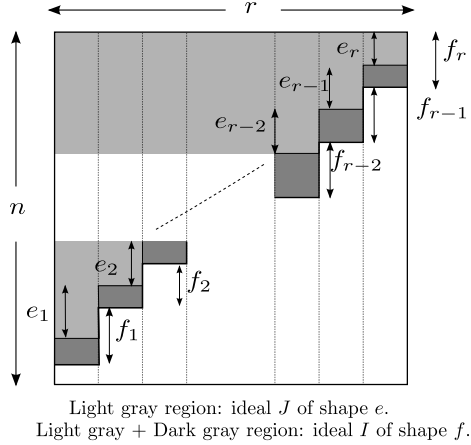}
\caption{To the proof of Corollary \ref{cor:woc}}
\label{fig:ideals}
\end{figure}
\begin{proof} Recall that the shape of an ideal $I$ is 
  $\shape(I)=e=(e_1,\dots,e_r),$ 
where $e_j,j=1,\dots,r$ is the number of chains of length $j$ contained in $I$.
We obtain $|\Omega(I)| = |e|$ and
$$
\sum_{I\in\cI_s} \binom{|\Omega(I)|}l = \sum_{e: |e|' = s} \binom{|e|}l
\binom{n}{e_0,\dots,e_r}.
$$
To determine the last term in (\ref{eq:enum}), we rewrite it as
\begin{align*}
\sum_{I\in\cI_s} \sum_{J\in\cI_d(I), J\supseteq \tilde{I}} A_J &= 
            \sum_{J\in \cI_d} |\{I\in\cI_s: \tilde{I} \subseteq
            J \subseteq I\}| A_J \\
    &= \sum_{e:|e|'=d} N_s(e) \sum_{J:\shape(J) = e} A_J,
\end{align*}
where 
$
N_s(e) = |\{I\in\cI_s: \tilde{I} \subseteq
            J \subseteq I, J \mbox{ fixed, shape}(J) = e\}|.
$

Clearly, $\sum_{J:\shape(J) = e} A_J = A_e,$ and so we only need to
determine the quantity $N_s(e)$ in the above summation. Let $J$ be
an ideal as shown in Fig. \ref{fig:ideals}.
The ideals $I$ which satisfy the constraints in
the set defined by $N_s(e)$ have the form as shown in Fig.
\ref{fig:ideals}.
Letting $f=\shape(I)$, we note that the components of the 
shape $f$ must satisfy
\begin{align*}
    f_r &\ge e_r,\\
    f_r + f_{r-1} &\ge e_r + e_{r-1} \ge f_r, \\
    &\vdots\\
    f_1 +\cdots+ f_r = |f| &\ge |e| = e_1+\cdots+e_r \ge f_2 + \cdots+f_s,\\
    \mbox{and } |f|' &= s.
\end{align*}
It is now readily seen that the cardinality of the set 
   $$
\{I\in\cI_s:
\tilde{I} \subseteq J\subseteq I, J\mbox{ fixed, shape}(J) = e\}
  $$
   is given
by the formula for  $N_s(e)$ as described in (\ref{eq:enum-nrt}). 
\end{proof}

{\bf Remark:} For $r=1$ we obtain $|e| = |e|'=e_1 = d, |f|=f_1 = s $ 
and $N_s(e) = \binom{n-d}{s-d}.$ Thus we recover  the expression for the
weight distribution of an NMDS code in Hamming space \cite{dod95}:
\begin{equation}
\label{eq:enum-ham}
A_s = \sum_{l=0}^{s-d-1} (-1)^l \binom{s}l \binom{n}s(q^{s-d-l}-1)
    + (-1)^{s-d} \binom{n-d}{s-d} A_d.
\end{equation}

Unlike the case of poset MDS codes \cite{hyu08}, 
the weight distribution of NMDS codes is not completely known 
until we know the number of codewords with l.a. support $J$ for every ideal
of weight $J$ of size $d.$
In particular, for NMDS codes in the ordered Hamming space we need to 
know the number of codewords of every shape $e$ with $|e|'=d.$ 
This highlights the fact that the combinatorics of codes in the poset
space (ordered space) is driven by ideals (shapes) and their support
sizes, and that the weight distribution is a derivative invariant of
those more fundamental quantities.

As a final remark we observe that, given that $d(\cc)=n-k$,
the assumption $d(\cc^\perp)=k$ (or the equivalent assumption
$d_2(\cc)=n-k+2$) ensures that the only unknown components of the
weight distribution of $\cc$ correspond to ideals of size $d$.
If instead we consider a code of defect $s,$ 
i.e., a code with $d(\cc)= (n-k+1)-s, \,s\ge 2$,
 it will be possible to compute its weight distribution
using the components $A_J, d\le|J|\le n-d(\cc^\perp)$ (provided
that we know $d(\cc^\perp)$). In the case of the Hamming metric this
was established in \cite{fal98}.

\section{Constructions of NMDS codes}
\label{sec:nmds-constr}
In this section we present some simple constructions of NMDS codes in the
ordered Hamming space for the cases $n=1,2,3.$ We are not aware of any
general code family of NMDS codes for larger $n.$

{\bf n=1:}$\quad$ For $n=1$ the construction is quite immediate once we
recognize that an NMDS $[r,k,d]$ code is also an OOA of r.a. strength $k-1$
and index $q$. Let $I_l$ denote the identity matrix of size $l.$ Let
$x=(x_1,\dots,x_r)$ be any vector of l.a. weight $d=r-k$, i.e. $x_d \ne 0$
and $x_l = 0, \,l=d+1,\dots, r$. Then the following matrix of
size $k\times r$ generates an NMDS code with the above parameters
    \begin{equation}
        \label{eq:constr1}
        \left[\begin{array}{ccc}
        x_1 \dots x_d & 0 & \bo \\
        M & \bo & I_{k-1}
        \end{array}\right],
    \end{equation}
where the $\bo$s are zero vectors (matrices) of appropriate 
dimensions and $M\in\ff_q^{(k-1)\times d}$ is any arbitrary matrix.

{\bf n=2:}$\quad$   
    Let $
    D_l = \left[ \begin{smallmatrix}
    0   & \dots     &   1 \\
\vdots  & \iddots     &   \vdots \\
    1   & \dots     &   0 
    \end{smallmatrix}\right]
    $
    be the $l\times l$ matrix with 1 along the inverse diagonal and
    0 elsewhere. 
    Let $u$ and $v$ be two vectors
    of length $r$ in $\ff_q^{r,1}$ and l.a. weights  $r-k_1$ and
    $r-k_2$ respectively and let $K=k_1+k_2$. The following 
    matrix generates a $[2r,K,2r-K]$ linear NMDS code in $\ff_q^{r,2}$,
    $$
    \left[\begin{array}{cccc|cccc}
    u_1 \,\dots\, u_{r-k_1-1}   & u_{r-k_1} & 0  & \bo       & 
    v_1 \dots v_{r-k_2-1}       & v_{r-k_2} & 0  & \bo       \\
        \bo                     & 0         & 1  & 0         &
        \bo                     & 0         & 1  & 0         \\
        \bo                     & \bo       & \bo&I_{k_1-1} &
     E_r(k_1,k_2)               & \bo       & \bo& \bo       \\
     E_r(k_2,k_1)               & \bo       & \bo& \bo       & 
        \bo                     & \bo       & \bo& I_{k_2-1}
    \end{array}\right],
    $$
    where 
    $E_r(i,j)$ is an $(i-1)\times(r-j-1)$ matrix which has the following
    form:
    $$
    E_r(i,j) = \begin{cases}
        \left[ \begin{array}{c}
        D_{r-j-1} \\ \hline
        \bo_{(i+j-r )\times (r-j-1)}
        \end{array}
        \right],     &   i+j > r, \\
            \\
        \left[ \begin{array}{c|c}
        \bo_{(i-1) \times (r-i-j)} & D_{i-1}
        \end{array}
        \right],     &   i+j \le r.
        \end{cases}
    $$
    From the form of the generator matrix it can be seen that any $K-1$
    r.a. columns of the above matrix are linearly independent.
    But the last $k_1$ and $k_2$ columns from the first 
    and the second blocks respectively are linearly dependent. This
    implies that it forms an OOA of r.a. strength exactly $K-1$. 
    Hence the dual of the code has distance $K$. Finally, the minimum
    weight of any vector produced by this generator matrix is $2r-K$. Hence
    by Lemma \ref{lem:nmds}, this matrix generates
    an NMDS code.

{\bf n=3:}$\quad$ For $n=3$, we have an NMDS code with very specific
        parameters. Let $u, v, w \in \ff_q^{r,1}$ be three vectors of
        l.a. weight $r-2$ each. Then the  matrix shown below is the
        generator matrix of a $[3r,6,d]$ code in base $q\ge3$. 
It is formed of 
three blocks, corresponding to the three dimensions given by $n.$ Here
        $\bo$ is a $1\times(r-6)$ zero vector.
   \begin{align*}
        &\left[ 
        \begin{array}{ccccccc}
            u_1 \dots u_{r-6}&u_{r-5} & u_{r-4} & u_{r-3} & u_{r-2} & 0 & 0     \\
      \bo                    &   0    &     0   &   0     &     0   & 1 & 0     \\
       \bo                   &   0    &     1   &   0     &     0   & 1 & 0     \\
       \bo                   &   1    &     0   &   0     &     0   & 0 & 1     \\
        \bo                  &   0    &     1   &   0     &     0   & 0 & 1     \\
        \bo                  &   0    &     0   &   1     &     0   & 0 & 0     
        \end{array}
        \right]
\\         &\left[
        \begin{array}{ccccccc}
            v_1 \dots v_{r-6}&v_{r-5} & v_{r-4} & v_{r-3} & v_{r-2} & 0 & 0     \\
       \bo                   &   0    &     0   &   0     &     0   & 1 & 0     \\
        \bo                  &   0    &     1   &   0     &     0   & 0 & 0     \\
        \bo                  &   0    &     0   &   1     &     0   & 0 & 0     \\
        \bo                  &   0    &     1   &   0     &     0   & 0 & 1     \\
        \bo                  &   1    &     0   &   0     &     0   & 0 & 1     
        \end{array}
        \right]\\
        &\left[
        \begin{array}{ccccccc}
            w_1 \dots w_{r-6}&w_{r-5} & w_{r-4} & w_{r-3} & w_{r-2} & 0 & 0     \\
        \bo                  &   0    &     1   &   0     &     0   & 0 & 0     \\
        \bo                  &   0    &     0   &   0     &     0   & 1 & 0     \\
       \bo                   &   0    &     1   &   0     &     0   & 0 & 1     \\
      \bo                    &   0    &     0   &   1     &     0   & 0 & 0     \\
      \bo                    &   1    &     0   &   0     &     0   & 0 & 1     
        \end{array}
        \right].
    \end{align*}
        

\providecommand{\bysame}{\leavevmode\hbox to3em{\hrulefill}\thinspace}
\providecommand{\MR}{\relax\ifhmode\unskip\space\fi MR }
\providecommand{\href}[2]{#2}

\end{document}